\newtheorem{Proposition}{Proposition}
\begin{document}
\title{Social Centrality using Network Hierarchy and Community Structure}

\author{Rakhi Saxena  
       \and  Sharanjit Kaur 
       \and  Vasudha Bhatnagar
}


\institute{Rakhi Saxena \at
              Deshbandhu College, 
              University of Delhi\\
               \email{rsaxena@db.du.ac.in}           
           \and
           Sharanjit Kaur \at
             Acharya Narendra Dev College, 
              University of Delhi\\
               \email{sharanjitkaur@andc.du.ac.in}
           \and
           Vasudha Bhatnagar \at
             Department of Computer Science, 
              University of Delhi\\
               \email{vbhatnagar@cs.du.ac.in}  
}

\date{Received: date / Accepted: date}

\maketitle
\begin{abstract}
Several centrality measures have been formulated to quantify the notion of `importance' of actors in social networks. Current measures scrutinize either local or global connectivity of the nodes and have been found to be inadequate for social networks.  Ignoring hierarchy and community structure, which are inherent in all human social networks, is the primary cause of this inadequacy. Positional hierarchy and embeddedness of an actor in the community are intuitively crucial determinants of his importance. 

The theory of social capital asserts that an actor's importance is derived from his position in network hierarchy as well as from the potential to mobilize resources through intra-community (bonding) and inter-community (bridging) ties. Inspired by this idea, we propose a novel centrality measure SC (Social Centrality) for actors in social networks.  Our measure accounts for - i) an individual's propensity to socialize, and ii) his connections within and outside the community. These two factors are suitably aggregated to produce social centrality score.  

Comparative analysis of SC measure with classical and recent centrality measures using large public networks shows that it consistently produces more realistic ranking of nodes. The inference is based on the available ground truth for each tested networks. Extensive analysis of rankings delivered by SC measure and mapping with known facts in well-studied networks justifies its effectiveness in diverse social networks. Scalability evaluation of SC measure justifies its efficacy for real-world large networks.
\end{abstract}
\section{Introduction}
Centrality is widely-used for identifying important/powerful nodes in a network \citep{Bloch16, Landherr2010}.  Node centrality is based on the perception of importance and hence is subjective. Different centrality measures have been devised and analysed by social scientists, computer scientists, mathematicians, physicists etc. \citep{Bloch16, survey-BoldiV13, BorgattiE06}. Theoretical and empirical investigations of these measures include surveys \citep{survey-BoldiV13, Landherr2010} and studies on their correlation \citep{li2015,correlated-centrality}. Since each centrality measure is effective in limited contexts \citep{Bloch16, KangPST11}, selecting a suitable centrality measure is a challenging task in social network analysis. In this setting, we propose a novel centrality measure for social networks, which is inspired by the well-studied {\it Sociology}  theory of {\it Social Capital} \citep{Lin2008, Putnam2002}. 
\subsection{The Problem and Motivation}
Despite the existence of several exact and approximate node centrality measures,  \textit{social importance} of actors in social networks is inadequately captured. When applied to social networks, prevalent centrality measures fall short of delivering ranks consonant with \textit{social importance} of actors enumerated by ground truth. Existing centrality measures consider direct benefits from \textit{actual resources} owned by an actor (number of connections, number of shortest paths on which an actor lies, number of connections that are reachable in $k$--hops, etc.) to gauge his importance. According to the theory of Social Capital,  an actor's social value is derived not only from the resources he owns but also from his \textit{potential} to acquire resources through social interactions. Researchers have asserted that availability of potential resources is dependent on an actor's position in network hierarchical and  community structure \citep{Lin2008,Putnam2002}. 

Though existing centrality measures are effective in diverse networks, they are inadequate to capture centrality in social networks. Taking into consideration the social capital theory, we ascribe following two reasons for this shortcoming:
\begin{enumerate}[(i)]
\item
These methods are oblivious to actor's position in network \textit{hierarchy} that is instrumental in determining the \textit{ease} of an actor to mobilize resources. 
\item
These methods ignore actor's embeddedness in network \textit{community structure} that determines the \textit{amount and variety} of resources available to the actor. 
\end{enumerate}
Our study on impact of network hierarchy and community structure  on social importance results in a novel measure called \textit{Social Centrality} (SC). The idea is inspired by the established {\it Theory of Social Capital} in Sociology \citep{Lin2008,Putnam2002}.  We show that SC is effective in diverse human social networks such as collaboration, terrorist and email networks.  Simple computation of SC makes it easily amenable to parallelization, and hence  scalable.
\subsection{Theory of Social Capital}
\label{social-cap-th-sec}
The potential of an actor to generate resources is termed as \textit{social capital} and has been a topic of substantial interest among Sociology community. Nahapiet and Ghoshal define social capital as `sum of actual and potential resources embedded within, available through, and derived from the network possessed by an individual' \citep{ghoshalsc-98}. 

Lin's structural postulate for social capital states that actors in a network form a pyramidal hierarchy in terms of distribution of resources and an actor's position in the hierarchical structure determines the availability of those resources \citep{Lin2008}. Since actors access resources via ties with neighbours, quality of ties influences ease of access of resources. The notion of strength of ties was introduced in the seminal article by Granovetter \citep{granovetter73}.  He argued that strong ties tend to \textit{bond} similar people and promote mutual connections leading to a \textit{tightly-knit groups}. Weak ties, on the other hand, connect heterogeneous groups forming bridges in social networks. Putnam premised that \textit{bonding} and \textit{bridging} ties  are primary determinants of social capital of an actor \citep{Putnam2002}.Weak or bridging ties are crucial for social capital because they bring together dissimilar individuals owning diverse resources. 

The notion of \textit{tightly-knit group} of \textit{homogeneous} actors in Sociology translates into  \textit{community} in Social Network Analysis. Communities in a network are subsets of nodes 
that are relatively densely connected to each other but sparsely connected to other dense subsets in the network.
It is well documented that community structures are inherent in social networks \citep{Wang15, Girvan02}. Since edges appear with a high concentration within communities and sparsely between communities, differently embedded actors have access to a variety of resources.
\subsection{Our Contributions}
In this paper, we introduce \textit{Social Centrality} (SC) as novel centrality measure that exploits network hierarchy and community structure in social networks. The summary of contributions follows.
\begin{enumerate}[(i)]
\item We use the theory of social capital to compute  a numeric measure denoting the importance of an individual in a social network (Sec. \ref{sc-sec}). 
\item We use k-truss decomposition method in an innovative manner to approximate community structure and determine nature of ties of actors. We also use trussness property to elicit hierarchy of nodes in the network (Sec. \ref{truss-sec}, \ref{sc-sec}).
\item We perform extensive evaluation of SC on real-world networks that sport ground truth and compare its effectiveness with classical and recent centrality measures using real-world networks  (Sec. \ref{sec-validation} to \ref{sec-quality}).
\item We demonstrate empirically the marginal loss in performance and significant gain in scalability by using trussness based approximation of communities instead of full-blown community detection to reveal community structure (Sec. \ref{sec-consonance}, \ref{sec-scalability}).
\item We evaluate the scalability of SC w.r.t. large synthetic and real-world networks (Sec. \ref{sec-scalability}).
\end{enumerate}
\section{Related Work}
\label{sec-related}
Centrality measures are broadly classified into two groups namely, degree-based and flow-based measures \citep{KangPST11, BorgattiE06}. 

Degree-based measures compute node importance on the basis of immediate neighbours. Degree Centrality (DC), counts all one-step neighbours completely ignoring global network topology. Eigenvector Centrality (EC), Principal Component Centrality (PCC) are extensions of DC. EC captures the importance of neighbours recursively and is computed as the principal eigenvector of the adjacency matrix defining the network \citep{bonacich1987power}. PCC  identifies multiple sets of influential neighbourhoods compared to single influential neighbourhood identified by EC \citep{IlyasR10}. Laplacian Centrality (LC) uses structural information about connectivity and density around the node to quantify its importance \citep{Laplacian13}. 

Flow-based measures utilize the notion of walk lengths and walk counts. Closeness Centrality (CC) falls under this category and measures node centrality as the inverse of the sum of geodesic distance from all other nodes \citep{Freeman:SN79}. Betweenness Centrality (BC) computes importance of node on the basis of the number of shortest paths passing through it \citep{Freeman:SN79}. Recently proposed  Spanning Tree Centrality (STC) measures the vulnerability of nodes in keeping the network connected \citep{Qi2015}. 

Researchers have also attempted to capture node importance by measuring social capital. Network Constraint (NC) quantifies social capital by measuring the concentration of an actor's connections in a single group of interconnected neighbours \citep{burt2001structural}. Recently, Subbian et al. proposed {\it SoCap} measure that computes social capital of the network and distributes it among nodes using a value-allocation function \citep{SoCap2014}. Based on the enumeration of all-pair-shortest-path, SoCap is computationally expensive.  Both of these methods exclude benefits arising from the hierarchical position and community membership of the individual.

Recently Gupta et. al. proposed a centrality measure that uses community detection to detect community structures \citep{Gupta2016}. The proposed method Comm Centrality (CoC) is a weighted combination of a node's inter- and intra-community links.  Intra-community links of the nodes attract higher weights than inter-community links while calculating their centrality value. However, CoC is oblivious to the hierarchical structure of real-world networks. 

\textit {Existing methods for discovering important nodes do not take cognizance of both hierarchy and community structure in social networks for determining centrality of actors. Since humans derive benefits concomitant with their position in the network hierarchy, and with the strength of their intra-- and inter--community connections, we posit that a centrality measure that takes these aspects into account gauges the importance of individuals more realistically in social networks. }
\section{Preliminaries}
\label{prelims-sec}
In this section we present the formal notation used in the paper, and sociology concepts that form the basis of the proposed centrality measure.  

We represent a social network as simple, undirected, unsigned,  edge-weighted graph $G$=$(V,E,W)$, a triplet formed by (i)  finite set of nodes  $V$, (ii) set of edges  $E \in V \times V$ , and (iii) $W:V \times V \rightarrow \mathbb{R}_{\ge 0}$, an edge-weight matrix. Let $|V| \,(= n)$ be the number of nodes  and $|E|\, (=m)$  be the number of edges. $V$ in $G$ models social actors, edge  $e_{ij} \in E$ models relation or link between  actors $v_i$ and $v_j$ $(v_i, v_j \in V)$. Weight $w_{ij} \in W$, of edge $e_{ij}$ quantifies the extent of relationship between $v_i$ and $v_j$. If $e_{ij} \notin E$, then $w_{ij}=0$. Let $N_i$ denote the set of neighbours of vertex $v_i$. $N_i \cap N_j$ is the set of common neighbours of vertices $v_i$ and $v_j$.
\subsection{Hierarchy in Networks}
Hierarchical decomposition is a pragmatic approach to analyse, visualize and understand massive networks  \citep{ Alvarez08, kcoretool06}. A hierarchy in a network $G$ is a partition of vertex set $V$  based on a structural node property function. Formally,
\begin{definition}
\uline{Hierarchy:} Let $G$=$(V,E,W)$ be a graph and $P$ be a function defined on a structural node property such that ($P:V\rightarrow\mathbb{N}^+$).  A hierarchy $H_P$ on $G$ is a partition of $V$ into $k$ subsets $\{V_1, \dots V_k\}$, such that 
$\forall v \in V_l, P(v) = \xi_l, l = 1, \dots, k $.
\end{definition}
Clearly we may assume that $\xi_l < \xi_{l+1}$. All nodes in $V_l$, share the same property value $\xi_l$. This ordering is mapped to the hierarchy levels in the network $G$. Nodes in $V_1$ are considered to be at level $l_1$, which is lower than that of the nodes in $V_2$ (i.e. $l_2$), and so on.  Example of structural node properties that can be used to derive hierarchy include node degree, number of nodes reachable in k-hops, number of shortest paths on which the node lies, number of closed triads the node is part of, etc. It is straightforward to extend the idea to property values that evaluate to $\mathbb{R}^+$.

\subsection{Social Importance, Hierarchy and Community Structure}
Sociologists assert that the importance of a social actor is a consequence of his direct ties as well as embeddedness in network structure \citep{Lin2008, Adler2002Social, ghoshalsc-98}. An actor's immediate personal ties provide the mechanism to acquire resources quickly \citep{Leonid14, Lin2008, ghoshalsc-98}, while embeddedness in network structure enhances the actor's ability to acquire additional resources \citep{Adler2002Social, ghoshalsc-98}. Hierarchy is an important dimension of a network that shapes the structure of social relations and determines the ease of access to resources via inter-personal linkages \citep{Lin2008}. Structural features also give rise to communities, which manifest as the subset of actors sharing semantically close relationship and influences the actor's ability to access resources within and outside the community to which actor belongs \citep{Adler2002Social}. 

In summary, embeddedness in hierarchy and community structure shapes the ease, the amount and the range of resources that an actor can potentially acquire, while direct ties determine how much of this potential will be actualized \citep{Leonid14}.  
\subsection{Bonding and Bridging Potential}
Every individual in a social network has potential to \textit{bridge} and \textit{bond} with other members of the network. Structural embeddedness of an actor determines his potential for bonding with members of his community and bridging with outsiders \citep{Leonid14, Adler2002Social}. Intra-community ties of an individual have a higher proportion of shared links (neighbours) within community \citep{David2010}. Such ties naturally give rise to cohesive regions in the network and result in bonding potential of the individual. Inter--community ties, on the other hand, have fewer common neighbours. Interestingly, such ties are important because they bridge communities and allow fresh ideas to percolate among community members \citep{granovetter73}. Thus inter--community ties determine the bridging potential of an individual. 

Quantification of bridging and bonding potential of an actor in a network requires knowledge of intra- and inter-community ties. In absence of a-priori information, the only recourse is to detect community membership and determine nature of ties. Except for label propagation algorithm with its known weakness of instability \citep{LabelRank13}, most community detection algorithms are computationally prohibitive for large networks \citep{Wang15}. The challenge in computing social centrality efficiently for large  networks is to determine the nature of actors' ties without running a community detection algorithm.

\subsection{Discriminating between Nature of Ties}
To circumvent the need of applying community detection algorithm, we adopt the observation made by Shi et. al. \citep{Adamic07} that close contacts of an individual themselves tend to know each other. This is natural because of relatively frequent opportunities for interactions among individuals. This situation is modelled topologically by three nodes linked to each other, forming a closed triad. Typically, ties within communities have more common neighbours and hence are part of a larger number of closed triads \citep{David2010, Adamic07}. In contrast, ties between communities have fewer common neighbours and are part of a 
lesser number of closed triads \citep{David2010, Adamic07}. 

Bolstered by these observations, we ascertain the nature of ties based on their participation in closed triads. Previous works \citep{Nick2013, Radicchi2004} distinguish between intra- and inter-community ties based on the count of closed triads tie is part of. Since we need to ascertain both node position in network hierarchy as well as nature of ties we use the k-truss hierarchical decomposition method \citep{Cohen08}. This method peels a graph into nested, dense subgraphs composed of closed triads. The trussness of an edge is a stronger indicator of belongingness to the same community because k-truss decomposition promotes local feature of an edge (number of closed triads it participates in) into a global feature of the edge (trussness). {\it Trussness} of edges is used to determine node trussness, which is employed to ascertain network hierarchy. The nature of tie is determined by the trussness of the node pair and their connecting edge. The details follow in the next section. 


\section{Trussness based Hierarchy and Nature of Ties}
\label{truss-sec} 
The concept of k-truss was proposed by Cohen as a method to hierarchically decompose a graph into subgraphs with specific properties \citep{Cohen08}. Our motivation for using k-truss decomposition is to ascertain the hierarchical level of nodes in order to gauge their importance. We also exploit the decomposition to approximate community structure and to infer the strength of ties. The formal definition of k-truss adapted from \citep{Cohen08} follows. 

\begin{definition}
\uline{k-truss}:  The subgraph $T_k$ of graph $G$, where $k \geq 2$, is the k-truss of G,  iff each edge in $T_k$ is a part of at least $(k-2)$ closed triads, and $T_k$ is the maximal graph with this property. 
\end{definition}

\begin{figure}[!htbp]
\includegraphics[height=1.8in,width=5.0in]{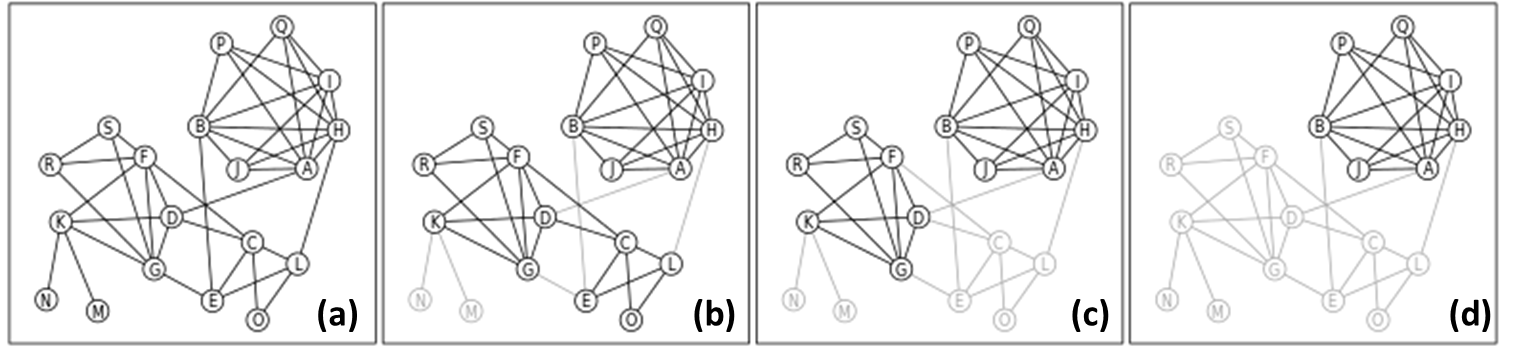} 
\caption{Truss Decomposition  example (a) Toy Network (2-truss) with n=19 and m=42, (b) 3-truss subgraph, (c) 4-truss subgraph with two components, (d) 5-truss subgraph.}
\label{trussdecomposition}
\end{figure}  
By definition, 2-truss is the graph $G$ itself. k-truss of $G$ is computed by iteratively removing edges that are not part of (k-2) triangles, until no more edges can be removed. 
\begin{example}
\textit{Fig. \ref{trussdecomposition}(a-d) shows k-truss decomposition of a toy network. The sequence of  images show  the  hierarchy of  subgraphs representing increasingly denser regions. Edges not contained in k-truss are dimmed for increasing values of k. }
\end{example}
We adapt the definition of  edge trussness from \citep{wang-vldb12}.
\begin{definition}
\label{edge-truss}
\uline{Edge Trussness} $t_{ij}$ of $e_{ij} \in E$  has value $k$,  iff $e_{ij} \in T_k\wedge  e_{ij}\notin T_{k+1}$. 
\end{definition} 
\begin{example}
\textit{Fig. \ref{trussexample-trussness} illustrates edge trussness of toy network  shown in Fig. \ref{trussdecomposition}(a). Edge {\it $e_{EG}$} has trussness 2 because it is present in 2-truss but absent in 3-truss. Edge $e_{FG}$ has trussness 4, since it belongs to 4-truss but not to 5-truss.}
\end{example}
\begin{figure}[!htbp]
\begin{minipage}[c]{0.60\textwidth}
\includegraphics[height=1.6in,width=2.3in]{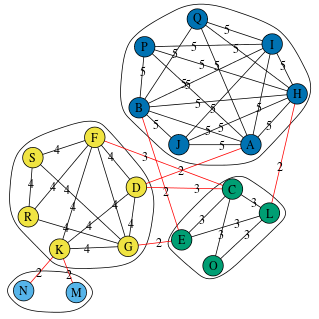}
\end{minipage}
\begin{minipage}[c]{0.35\textwidth}
\caption{\label{trussexample-trussness}Vertices with same node trussness bear the same colour and are grouped inside a polygon. Edges are labelled with their respective  trussness. Intra--community edges are coloured black and inter--community edges are coloured red.}
\end{minipage}
\end{figure} 
Trussness, by definition, is a property of an edge. It \textit{indicates} the strength of tie between two nodes based on the number of common neighbours.  In pursuit of the ultimate goal of determining hierarchy-level of a node as well as for discriminating intra- and inter-community ties, we define trussness of a node as the maximum trussness of its incident edges. 
\begin{definition}
\label{node-trussnss}
\uline{Node Trussness} $\tau_i$ of node $v_i \in V$  is the maximum of trussness of edges incident on it i.e.
$\tau_i = \max\limits_j (t_{ij})  $  
\end{definition}
By definition, an edge with trussness $k$ belongs to k-truss but not to (k+1)-truss. Proposition \ref{prop1} proves that the same property holds for a node with trussness $k$. 
\begin{Proposition}
\label{prop1}
Given a node $v_i$ with trussness $\tau_i = k$, $v_i \in T_k \wedge  v_i\notin T_{k+1}$
\end{Proposition}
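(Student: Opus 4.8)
The plan is to prove the two halves of the conjunction separately, relying on the definition of node trussness (Definition \ref{node-trussnss}) as the maximum edge-trussness over incident edges, the definition of edge trussness (Definition \ref{edge-truss}), and the nested structure of the k-truss hierarchy, namely $T_{k+1} \subseteq T_k$, which follows immediately from the k-truss definition since any edge lying in $(k+1)$-triangles also lies in $(k-1)$-triangles. Throughout I would adopt the natural convention that a vertex belongs to the subgraph $T_k$ precisely when it is an endpoint of at least one edge contained in $T_k$, since the k-truss is specified through its edge set.

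For the membership $v_i \in T_k$, I would use that the maximum in $\tau_i = \max_j(t_{ij}) = k$ is actually attained, so there exists some incident edge $e_{ij}$ with $t_{ij} = k$. By Definition \ref{edge-truss} this gives $e_{ij} \in T_k$, and since $v_i$ is an endpoint of $e_{ij}$, the vertex $v_i$ appears in the subgraph $T_k$, establishing $v_i \in T_k$.

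For the non-membership $v_i \notin T_{k+1}$, I would argue from the fact that $k$ is the maximum over all incident edges, so every edge $e_{ij}$ incident on $v_i$ satisfies $t_{ij} \le k$. By Definition \ref{edge-truss} together with the nesting $T_{k+1} \subseteq T_k$, an edge with trussness at most $k$ cannot belong to $T_{k+1}$, for otherwise its trussness would be at least $k+1$. Hence no edge incident on $v_i$ lies in $T_{k+1}$, so $v_i$ is not an endpoint of any edge of $T_{k+1}$, which yields $v_i \notin T_{k+1}$. Combining the two parts gives the claim.

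The main obstacle here is definitional rather than computational: one must fix explicitly what it means for a \emph{vertex} to belong to the edge-defined subgraph $T_k$, since trussness and the k-truss itself are framed in terms of edges. Once the interpretation ``a vertex lies in $T_k$ iff some incident edge lies in $T_k$'' is stated, the result is an immediate consequence of the maximum appearing in the node-trussness definition. The only other point I would take care to state cleanly is the nesting property $T_{k+1} \subseteq T_k$, which underwrites the contrapositive step in the second half.
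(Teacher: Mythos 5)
Your proof is correct and takes essentially the same route as the paper: membership $v_i \in T_k$ follows from an incident edge attaining the maximum, and non-membership in $T_{k+1}$ follows from maximality of $k$ over incident edge trussness (the paper phrases this second half as a contradiction, you as a contrapositive, which is only a cosmetic difference). If anything, yours is slightly more careful than the paper's, since you make explicit the vertex-membership convention for edge-defined subgraphs and the nesting $T_{k+1} \subseteq T_k$, both of which the paper uses implicitly (its claim that $v_i \in T_{k+1}$ yields an incident edge of trussness exactly $k+1$, rather than at least $k+1$, is a minor imprecision your version avoids).
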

\begin{proof}
We prove this by contradiction. Given that $\tau_i = k$, at least one edge incident on $v_i$ has trussness $k$ (by Def. \ref{node-trussnss}). Consequently, $v_i \in T_k$. Now suppose that $v_i \in T_{k+1}$. This implies there exists at least one edge incident on $v_i$ with trussness $k+1$. Then, by Def. \ref{node-trussnss}, $v_i$ must have trussness $k+1$, which contradicts the assumption that $\tau_i = k$.  Hence $v_i \notin T_{k+1}$.  $\hfill \square$
\end{proof}

Proposition \ref{prop1} implies that \textit{all} vertices with node trussness $k$ belong to the k-truss; the maximal truss number to which they can belong. This permits us to use node trussness as property to define hierarchy levels in the network. Given a graph decomposition, each hierarchy level in the graph is composed of all vertices with the same node trussness.
\begin{example}
\textit{In Fig. \ref{trussexample-trussness} vertices with the same node trussness are demarcated with same colour. Node trussness of vertex $A$ is 5, the maximum trussness of all its incident edges. Vertices $E$, $C$, $L$, and $O$ belong to the same hierarchy level since their node trussness is 3.}
\end{example}
Two nodes $v_i$ and $v_j$ with the same trussness $\tau_i = \tau_j = k$ are at the same hierarchy level. We posit that if they are linked by an edge $e_{ij}$ with trussness $t_{ij}=k$, they are part of  the same community, and the connecting edge is an intra-community edge. Accordingly, we define Trussness Matrix as below.
\begin{definition}
\uline{Trussness Matrix} $\Theta_{n \times n}$ is a symmetric boolean matrix  denoting the nature of ties (intra-- and inter--community) of $G(V,E,W)$. Element $\theta_{ij}$ of $\Theta$ is defined as 
\begin{equation}
    \theta_{ij}=
    \begin{cases}
      1, & \text{if}\ e_{ij} \in E \wedge (\tau_i = \tau_j = t_{ij}) \\
      0, & \text{otherwise}
    \end{cases}
\end{equation}
\end{definition}
A 1 in $\Theta$ denotes an intra--community link between pair of nodes, while the presence of 0 denotes either no link or inter--community link. Note that node and edge trussness have been used to approximate nature of links as intra- and inter-community ties.  The approximation is based on the sociological theory that the number of common neighbours shared by the endpoints of an edge account for the distinction between intra- and inter-community edges \citep{Nick2013}.


\begin{example}
\textit{In Fig. \ref{trussexample-trussness}, edge {\it $e_{AB}$} is a intra-community tie (coloured black) because $\tau_A = \tau_B = t_{AB} = 5$ whereas edge {\it $e_{AD}$} is a inter-community tie (coloured red).}
\end{example}
\section{Social Capital based Centrality Measure}
\label{sc-sec}
Formally, {\it Centrality} is a real-valued function $C: V \rightarrow \mathbb{R}^+$, which assigns a positive real number to each node indicating its importance. The proposed \textit{Social Centrality} (SC) measure designed for social networks takes into consideration the potential to access resources in contrast to existing measures that take into account actual resources held by a node. The design is intuitive and agrees with real-life experiences in human society. SC considers both direct ties as well as embeddedness of an actor to measure potential gains through the intensity of personal contacts, positional hierarchy and community structure.
\subsection{Sociability Index}
Sociability captures personal relationships people develop with each other through a history of interactions, consequent to commonalities, solidarity and trust. It quantifies the extent to which an actor can leverage the resources controlled by its immediate neighbours at dyadic level. We define \textit{Sociability Index} to quantify sociability by taking into consideration size of the immediate neighbourhood (ego-network) and intensity of relationships (edge-weights).
\begin{definition}
{\uline{Sociability index}} $\omega_i$ of node  $v_i$ is defined as the sum of weights of edges incident on it. 
\begin{equation}
\omega_i = \sum_{j}w_{ij}
\label{eqn-sociability}
\end{equation}
\end{definition}

Thus $\omega_i$ quantifies propensity to socialize by aggregating the intensity of direct relations of a node, indicating the potential gain from its ego-net. It determines the extent to which the node can mobilize resources (social capital) in the network. 
\subsection{Bonding Potential}
Embeddedness of an actor in the network community structure determines his bonding potential.  An individual located in a dense region of $G$, snug in the community,  has higher ability to access resources by virtue of his location. This ability is also impacted by sociability of the intra-community neighbours as well node's position in hierarchy quantified by node trussness. Effectively, bonding potential of an individual is determined by his hierarchical position in $G$ as well as by the sociability of his intra-community neighbours.

Based on the social nature of humans, we hypothesise that each new node that joins the network and is yet to link with an existing actor has the potential to bond. Let $\alpha_i$ denote innate potential of $v_i$ to bond. In practice, innate bonding potential can be derived based on meta-data available for nodes. For example, in collaboration networks, $\alpha_i$  can be computed from the number of researchers belonging to the same organization.

Bonding potential $\beta_i$ of $v_i$ is defined as sum of  sociability index of its intra-community neighbours weighted by its position in hierarchy.
\begin{definition} 
{\uline{Bonding potential}} $\beta_i$ of node $v_i$ is defined as  
\begin{equation}
\beta_i = \alpha_i +  \sum_{j}{\theta_{ij}\cdot \omega_{j}\cdot \tau_j}
\label{eqn-bonding}
\end{equation}
\end{definition}
Recall that $\theta_{ij}$ is 1 for neighbours within the community, and 0 otherwise.  Accordingly, only intra--community edges of a node contribute to its bonding potential. 
\subsection{Bridging Potential}
An actor with links in diverse communities can draw advantages that are not available within his community. The ability of an actor to bring in new resources, hitherto not available to the community, asserts his importance because of the onward transmission of advantages to other members of the community. E.g., pacts signed by the heads of two nations bring indirect advantages to the citizens who are touched by the pact.

It is realistic to assume that an individual is attracted to new vistas and has potential to bridge to new social groups in $G$. We denote the innate bridging potential of $v_i$ by $\delta_i$. Like $\alpha_i$, innate bridging potential can be determined from the meta-data. For example, in collaboration networks, $\delta_i$ can be quantified as the number of institutions in which a researcher has studied or worked.

The bridging potential of $v_i$ is the sum of the intensity of relationship with its inter-community neighbours scaled by their respective positions in the hierarchy. 
\begin{definition} 
{\uline{Bridging potential}} $\gamma_i$ of node $v_i$ is defined as  
\begin{equation}
\gamma_i = \delta_i +  \sum_{j}{\bar \theta_{ij}\cdot w_{ij} \cdot \tau_j}
\label{eqn-bridging}
\end{equation}
\end{definition}
Here $\bar \theta_{ij}$ is the complement of  $\theta_{ij}$ and has value 1 in case of an inter--community tie or no tie. Note that in case of no tie $w_{ij} = 0$. Thus only inter--community edges of a node contribute to its bridging potential.

\subsection{Social Centrality Score}
Social centrality score of a node quantifies its ability to mobilize resources in the network based on its location in the hierarchy, embeddedness in community and intensity of relationships with neighbours. A node with a high score is likely to be in the mainstream of resource flow within $G$.  On the other hand, a node with low score has relatively low ability to acquire resources and hence has low importance.

SC score $\Psi: V \rightarrow \mathbb R^+$ is a real-valued function that aggregates its sociability index, bonding and bridging potential.
\begin{equation}
\label{sc-eq}
\Psi_i = \mathcal{F}(\omega_i,  \beta_i, \gamma_i)
\end{equation}

Each attribute quantifies a node property in accordance with the tenets of the theory of social capital. Attribute $\omega_i$ denotes the intensity of interactions with the immediate neighbours. Bonding ($\beta_i$) and bridging ($\gamma_i$) potentials are computed based on the embeddedness of the node in the community structure while taking cognizance of the network hierarchy.

Choice of aggregator function $\mathcal{F}$ to compute the score depends on the type of social network being analysed. E.g., in organizational networks, bonding potential is more important to achieve the respective goals of the project teams. To analyse such networks, the aggregator function can be designed to assign a higher weight to $\beta_i$. In collaboration networks,  bridging potential is more important since multi-disciplinary researchers bring-in innovation in the group. Accordingly  $\mathcal{F}$ can be designed to accord higher importance to $\gamma_i$ in research networks. 

For simplicity and efficiency, we choose a  multiplicative function as the aggregator ($\mathcal{F}$) in our experiments. We also  set $\alpha_i=\delta_i=1 \; \forall v_i \in V$ to focus on the facets of social capital. Thus we compute SC score of node $v_i$ as 
\begin{equation}
\Psi_i = \omega_i \cdot  (1+\beta_i) \cdot (1+\gamma_i) 
\end{equation}

Design of context-specific aggregator functions for different types of social networks is an area of intense study and beyond the current scope. 
\subsection{Algorithmic Complexity}
SC algorithm uses $O(m+n)$ space to hold the input graph, $O(m)$ space for holding trussness of edges and $O(n)$ space for holding trussness as well as bonding and bridging potential of nodes. For k-truss decomposition, we use an elegant in-memory k-truss decomposition algorithm with time complexity $O(m^{1.5})$ as proposed by  Wang et al. \citep{wang-vldb12}. Computation of edge trussness thus takes $O(m^{1.5})$ time. Computation of node trussness iterates over edges and hence takes $O(m)$ time. Subsequent computation of sociability index, bridging and bonding potential again iterates over edges and takes $O(m)$ time. Thus algorithmic space complexity is $O(m+n)$ and time complexity is $O(m^{1.5})$. In future work, we intend to improve the edge trussness computation time by using more efficient triangle listing algorithms proposed in \citep{Ortmann2014}. 
\section{Experimental Evaluation}
\label{expt-sec}
The goal of experimental evaluation is to assess the performance of proposed Social Centrality measure (SC). The experimental study is designed to examine the following specific questions. 
\begin{enumerate}[(i)]
\item \textit{Is the social importance captured by SC measure consonant with known facts about the social importance of individuals in a social network? }

We examine this question in Sec. \ref{sec-validation} using a well-researched small terrorist network. This investigation establishes the validity of the proposed measure and permits reasoning about performance of SC measure.
\item \textit{Is the proposed measure effective for  large real-world social networks?} \\ 
To investigate this question,  we use four large networks that sport ground truth. The first is an e-mail network and other three are collaboration networks. We compare top ranked actors delivered by various centrality measures in this examination. Depending on the size of the network, we use either top-100 or top-500 actors for comparison with top-10 actors as per the ground truth available for the network (Sec. \ref{sec-effectiveness}). Later we drill down to check the overlap of predicted top rankers with ground truth (Sec. \ref{sec-quality}). 

\item \textit{How close are the ranks delivered by SC to ground truth in large networks?} 

This investigation studies correlation of ground truth ranking of all actors and their ranking delivered by compared centrality measures (Sec. \ref{sec-consonance}). 
\item \textit{To what extent does the SC ranking degrade because of using truss approximation of communities? What is the speed advantage of this approximation compared to executing community detection algorithm?}

We compare the performance of SC with another version SC-Com that uses a community detection algorithm \citep{blonde08}\footnote{We experimented with various community detection algorithms available in the Python \textit{igraph} library, and chose the Multilevel algorithm because it was the fastest.}. In this version, analogous to the trussness matrix $\Theta_{n \times n} $, the community matrix ($\Pi_{n \times n}$) is created where $\pi_{ij} = 1$ if $v_i$ and $v_j$ are in the same community and $0$ otherwise. Secs. \ref{sec-consonance}, \ref{sec-scalability} describe the results of this experiment.
\item \textit{Is the SC measure scalable w.r.t. large networks?}

We examine scalability of SC using several real-world and synthetic large networks (Sec. \ref{sec-scalability}).
\end{enumerate}

We evaluated performance of SC measure in comparison to classical centrality measures - Degree Centrality (DC), Eigenvector Centrality (EC), Betweenness Centrality (BC) and Closeness Centrality (CC),  and recently proposed   measures - Spanning Tree Centrality (STC), Laplacian Centrality (LC) and Comm Centrality (CoC). Further, we study the effectiveness of SC compared to two recently proposed measures of social capital - Network Constraint (NC) and SoCap (Sec. \ref{sec-related}). We also compare the performance of SC with SC-Com.

We implemented algorithm to compute SC in {\it Python} (32bits, v 2.7.3) and executed on Intel Core i7-6700 CPU @3.40GHz with 8GB RAM, running UBUNTU 16.04{\footnote{Weighted versions of classical centrality measures were used from the Python \textit{igraph} library; Python code for implemented measures and datasets used for experimentation are available at https://github.com/rakhisaxena/SocialCentrality}}. 
We report the results of experimentation in the following sub-sections.  
\subsection{Preliminary Investigation}
\label{sec-validation}
For the preliminary investigation, we use the Bali Terrorist Network \citep{Koschade2006}. This well studied weighted network with 17 actors and 63 edges represents communication between terrorists from the Jemaah Islamiyah cell responsible for Bali bombings in 2002 (Fig. \ref{bali-network}). Edges between actors are weighted by the strength of their relationship in range of (1-5), with 1 signifying the weakest and 5, the strongest relationship. 
\begin{figure}[!htbp]
\begin{minipage}[c]{0.60\textwidth}
\includegraphics[height=2.4in,width=2.5in]{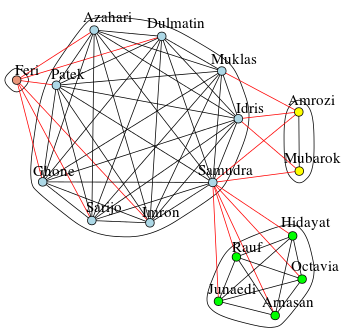}
\end{minipage}
\begin{minipage}[c]{0.35\textwidth}
\caption{\label{bali-network} Bali Terrorist Network -- Nodes with the same trussness are grouped inside a polygon and are marked with identical colour. Edges connecting nodes with the same trussness are coloured black and edges connecting nodes with dissimilar trussness are coloured red.}
\end{minipage}
\end{figure} 
\begin{table}[!htbp]
\centering \scriptsize 
\caption{\scriptsize Ranks of actors in the Bali terrorist network based on various centrality measures.}
\label{tab-baliscores}
\begin{tabular}{@{}llllllllllll@{}}
\toprule
\textbf{Actor}    & \textbf{SC} & \textbf{SC-Com} & \textbf{DC} & \textbf{EC} & \textbf{BC} & \textbf{CC} & \textbf{STC} & \textbf{LC} & \textbf{NC} & \textbf{SoCap} & \textbf{CoC}\\ \midrule
\textbf{Samudra}  & 1    & 2       & 1            & 1           & 1           & 1           & 1            & 1           & 1           & 1    & 3          \\
\textbf{Idris}    & 2   & 1         & 2            & 8           & 2           & 3           & 2            & 2           & 2           & 2       & 1       \\
\textbf{Imron}    & 3    & 3        & 3            & 3           & 3           & 2           & 3            & 3           & 3           & 3          & 4    \\
\textbf{Dulmatin} & 4   & 4        & 3            & 3           & 6           & 4           & 9            & 3           & 5           & 5       & 4       \\
\textbf{Ghone}    & 4   & 4         & 3            & 7           & 5           & 4           & 4            & 3           & 5           & 5     & 4         \\
\textbf{Patek}    & 6   & 6         & 3            & 3           & 6           & 6           & 4            & 3           & 7           & 8     & 4         \\
\textbf{Sarijo}   & 6   & 6         & 3            & 6           & 6           & 6           & 4            & 3           & 7           & 7    & 4          \\
\textbf{Azahari}  & 8   & 8         & 3            & 2           & 6           & 8           & 4            & 3           & 9           & 4    & 4          \\
\textbf{Muklas}   & 9    & 9        & 3            & 9           & 6           & 9           & 8            & 9           & 4           & 10   & 2          \\
\textbf{Arnasan}  & 10     & 13      & 11           & 13          & 6           & 11          & 11           & 11          & 12          & 11   & 11          \\
\textbf{Rauf}     & 10   &13        & 11           & 17          & 6           & 11          & 11           & 11          & 12          & 12       & 11      \\
\textbf{Octavia}  & 10   & 13        & 11           & 13          & 6           & 11          & 11           & 11          & 12          & 12       & 11      \\
\textbf{Hidayat}  & 10  & 13         & 11           & 16          & 6           & 11          & 11           & 11          & 12          & 12       & 11      \\
\textbf{Junaedi}  & 10   & 13        & 11           & 13          & 6           & 11          & 11           & 11          & 12          & 12      & 11       \\
\textbf{Amrozi}   & 15      & 10     & 16           & 11          & 4           & 10          & 16           & 16          & 10          & 9       & 16       \\
\textbf{Mubarok}  & 16    & 11       & 17           & 12          & 6           & 16          & 17           & 17          & 17          & 16      & 17       \\
\textbf{Feri}     & 17     & 12      & 10           & 10          & 6           & 17          & 10           & 10          & 11          & 17    & 10         \\ \bottomrule
\end{tabular}
\end{table}

Table \ref{tab-baliscores} lists the ranks assigned by different centrality measures. It is documented in \citep{Laplacian13} that terrorist Samudra was responsible for the actual bombing strategy and was the only contact between the bomb makers and the group setting off the bombs. He is ranked highest by all measures except SC-Com and CoC. This is because the community detection algorithm used \citep{blonde08} places Samudra with the smaller group of nodes (marked in green) thus reducing his bonding potential.

Idris, the logistics commander and Imron, the team's gofer, both holding key roles in the operation \citep{Qi2015} are ranked second and third by SC and majority of other measures. According to \citep{Koschade2006}, Feri, one of the suicide bombers had feeble ties with other members of the team and hence socially less important. Only  SC, CC and SoCap measures are able to capture this fact and assign the lowest rank to Feri. 

\textit{For this dataset, all centrality measures generally agree with each other. SC ranks are consistent with the majority of other centrality measures, with relatively stronger agreements with CC and social capital based measures SoCap.}
\subsection{Effectiveness Analysis}
\label{sec-effectiveness}
We examine the effectiveness of SC on four real-world social networks -- an email network and three co-author networks. We find these networks suitable for the experiment because they are reasonably large, publicly available and sport ground truth. 
In the email network, employee's position in the organizational hierarchy and payment received reveal their social importance and hence are treated as ground truth. In co-author networks, citation count of an author serves as ground truth as it reflects author prestige and importance in the academic community. Table 2 shows structural features of the networks. Details of these networks and analysis of experimental results follow. 
\begin{table} [!htbp]
\label{tab-datadesc}
\begin{minipage}{3.07in}
\scriptsize {
\begin{tabular}{@{}lrrllll@{}}
\toprule
\textbf{N/W} & \multicolumn{1}{c}{\textbf{n}} & \multicolumn{1}{c}{\textbf{m}} & $\mathbf{\overline  k}$ & $\mathbf{k_{max}}$ & \textbf{gcc} & \textbf{T} \\ \midrule
\textbf{HepTh}   & 11767      & 41323      & 7.02            & 330             & 0.23         & 10         \\
\textbf{Enron}   & 73318      & 283500     & 7.73            & 3038            & 0.07         & 36         \\
\textbf{DBLP}    & 971783     & 3481524    & 7.16            & 889             & 0.25         & 173        \\
\textbf{USPTO}   & 1159188    & 2719279    & 4.69            & 685             & 0.26         & 34          \\ \bottomrule
\end{tabular}
} 
\end{minipage}
\begin{minipage}[c]{0.35\textwidth}
\caption{\scriptsize Structural Properties of N/Ws. $\overline  k$-Avg. Degree, $k_{max}$-Max. Degree, $gcc$-Global Clustering Coefficient, $T$-Highest Truss Level.}
\end{minipage}
\end{table}

{\textbf{(i) Enron email network:}} We downloaded\footnote{http://www.kddcup2012.org/wcukierski/enron-email-dataset} Enron email dataset that contains $\approx$  500,000 emails generated by employees of the Enron corporation. The email network has senders and recipients as nodes and an edge indicates the exchange of email between the endpoints.  Edge-weight corresponding to an email sent to $n$ recipients is set as $1/n$. Multiple edges between each sender-recipient pair are coalesced by summing the edge-weights.

Table \ref{tab-top10Enron} presents ranks assigned by each measure to top-10 employees in the Enron organization hierarchy. STC and SoCap could not complete execution on our machine\footnote{STC ran out of memory and SoCap did not complete even after running for a day.}. The first three columns of Table \ref{tab-top10Enron} list top-10 employees, their organizational position and respective payment they received\footnote{Enron employee position in organization structure obtained from http://www.infosys.tuwien.ac.at/staff/dschall/email/enron-employees.txt.}. Rest of the columns show the ranks assigned by different measures in their respective top-100 list. It is evident from the table that only SC is able to find all top-10 employees in its top-100 list. Other measures discover only a few, with NC finding the least. 

The last row of Table \ref{tab-top10Enron} shows the rounded off Root Mean Square Error (RMSE) between actual ground truth ranks of top-10 employees and ranks assigned by respective measure. Let $R^m$ be the vector of ranks assigned by measure $m$ to top-k actors. We compute the $RMSE^m$ of measure $m$ as 
\begin{equation}
\label{eqn-rmse}
\scriptsize{RMSE^m = \sqrt{\frac{1}{k}\sum_{t=1}^{k}(R^m[t] - t)^2}}
\end{equation}

SC supports the least RMSE indicating its superior ability to deliver ranks in consonance with ground truth. 

\textit{Overall, centrality scores delivered by  SC are closer to known facts about the network. This ability is attained because of the due consideration that is given to hierarchical positions and interaction patterns. }
\begin{table*}[!htbp]
\centering \scriptsize
\caption{\scriptsize Top-10 socially important employees (on the basis of their position and payment received in Millions ($P_M$)) in Enron email network and the ranks assigned to them by various measures. `-' denotes absence of employee in the top-100 list of the measure.}
\label{tab-top10Enron}
\begin{tabular}{@{}llllllllllll@{}}
\toprule
\textbf{Employee}        & \textbf{Position}                 & \textbf{$\mathbf{P_M}$} & \textbf{SC} &  \textbf{SC-Com}  & \textbf{DC} & \textbf{EC} & \textbf{LC} & \textbf{NC} & \textbf{BC} & \textbf{CC} & \textbf{CoC} \\ 
\midrule
\textbf{K. Lay}     & CEO                                 & \$103      & 26   & -         & 1             & -             & 1             & 17            & 14            & -  & 1           \\
\textbf{J. Lavorato}   & CEO                & \$10       & 15    & 9        & 21            & 29            & 79            & -             & 6             & 1   & 20          \\
\textbf{D. Delainey}  & CEO                      & \$4        & 18    & 8        & -             & 95            & 50            & -             & 15            & 5   & 84          \\
\textbf{M. Haedicke}   & MD                   & \$3        & 14  & 18          & 59            & -             & 10            & -             & 29            & 67  & -          \\
\textbf{L. Kitchen}  & President                           & \$3        & 13    & 10        & 11            & 39            & 41            & -             & 12            & 6   & 7          \\
\textbf{R. Buy}        & CRO      & \$2       & 33   & 31         & 43            & 76            & 45            & -             & -             & -  & 43           \\
\textbf{S. Kean}     & VP    & \$1        & 6     & 6        & 46            & -             & 38            & -             & 2             & 10  & 98          \\
\textbf{R. Shapiro} & VP & \$1        & 16     & 25       & 11            & -             & -             & -             & 78             & 9  & -           \\
\textbf{S. Beck}      & COO           & \$0.9         & 11    & 7         & 3             & 80            & 17            & -             & 23            & 57   & 11         \\
\textbf{J. Derrick}   & Lawyer                     & \$0.5          & 41     & -       & -             & -             & -             & -             & -             & -   & -          \\ 
\midrule
$\mathbf{\lceil RMSE \rceil}$ & & & 17 & 322	& 54 &	391 &	64 &	457 &	61 &	273 &	115 \\
 \bottomrule
\end{tabular}
\end{table*}

{\textbf{(ii) HepTh co-author network:}} We downloaded HepTh\footnote{https://www.cs.cornell.edu/projects/kddcup/datasets.html}  dataset which consists of (i) paper to author mapping of the High Energy Physics Theory portion of the arXiv until May 1, 2003 ($\approx$ 29K papers), (ii) citing-cited relation between papers. The co-author network consists of nodes representing authors and edge between authors indicates co-authorship on the same paper. Weight of edge representing co-authorship on paper with \textit{n} authors (including self) is set as $1/(n-1)$. Multiple edges are collapsed by summing edge-weights. 
\begin{table}[!htbp]
\centering \scriptsize {
\caption{\scriptsize Top-10 cited authors in HepTh co-author network and the ranks assigned to them by various measures. `-' denotes the absence of the author in the top-100 list of the measure.}
\label{tab-top10HepTh}
\begin{tabular}{lllllllllll}
\toprule
\textbf{Name} & \textbf{SC} & \textbf{SC-Com} &\textbf{DC} & \textbf{EC} & \textbf{LC} & \textbf{NC} & \textbf{BC} & \textbf{CC} & \textbf{SoCap}& \textbf{CoC} \\ \midrule
\textbf{E. Witten}      & 29   & 32         & 97            & -             & -             & 4             & 15            & 2             & 37   & 55            \\
\textbf{N. Seiberg}     & -   & -          & -             & -             & -             & -             & -             & -             & -     & -           \\
\textbf{M. R. Douglas}  & 87   & 74         & -             & -             & -             & 2             & 21            & 6             & -     & -           \\
\textbf{C. Vafa}        & 20   & 22         & 42            & -             & 56            & 13            & 61            & 21            & 30  & -              \\
\textbf{S. S. Gubser}   & -   & 33          & -             & 37            & -             & -             & -             & -             & -   & -             \\
\textbf{N. Seiberg}     & 55   & -         & -             & -             & -             & 33            & -             & 37            & -                \\
\textbf{A. A. Tseytlin} & 18 & 7           & 20            & 38            & 32            & 5             & 1             & 95            & 5       & 2         \\
\textbf{I. R. Klebanov} & 60   & 98         & -             & -             & -             & -             & -             & -             & 85 & -              \\
\textbf{A. Strominger}  & 98   & 59         & 60            & -             & 93            & 8             & -             & 19            & 66    & 24           \\
\textbf{P. K. Townsend} & 61     & 13       & 27            & 19            & 24            & -             & 11            & 97            & 54    & -           \\ \midrule
$\mathbf{\lceil RMSE \rceil}$ & 103 & 270 & 124 & 498 & 144 & 162 & 453 & 353 & 329 & 277 \\
 \bottomrule
\end{tabular}
}
\end{table}

Table \ref{tab-top10HepTh} shows top-10 cited authors in this network and ranks assigned by various measures in the expanded list of  top-100 ranks. STC algorithm could not complete execution on this (smallest of 4) network while EC performs worst (3/10). SC retrieves the maximum (8/10) authors with least RMSE value, demonstrating its effectiveness in capturing the social importance of authors in collaboration network. NC retrieves fewer authors with apparently better ranks but with higher RMSE value. CC and SoCap, which were close competitors in Bali network perform poorly in this network. We revisit and analyse the ranks in Sec. \ref{sec-consonance}.
\begin{table}[!htbp]
\centering \scriptsize {
\caption{\scriptsize Top-10 cited authors in DBLP co-author network and their ranks assigned by various measures. `-'  denotes absence of the author in the top-500 list of the measure.}
\label{tab-top10DBLP}
\begin{tabular}{@{}llllllll@{}}
\toprule
Name               & \textbf{SC} & \textbf{SC-Com} & \textbf{DC} & \textbf{EC} & \textbf{LC} & \textbf{NC} & \textbf{CoC} \\ 
\midrule
\textbf{Jeffrey D. Ullman}     & 18   & 215        & -          & -           & -           & -    &   -   \\
\textbf{Rakesh Agrawal}       & 21  & 130        & 467          & -           & -           & -    &  -     \\
\textbf{Hector Garcia-Molina} & 1   & 7         & 169           & -           & 174         & 423 & -       \\
\textbf{David S. Johnson}      & -  & -          & -            & -           & -           & -   & -       \\
\textbf{Jiawei Han}           & 99  & 10        & 70           & -           & 66          & 433   & 25     \\
\textbf{Scott Shenker}        & -    & -       & -          & -           & -           & -       &  -  \\
\textbf{Christos Faloutsos}   & 239  & 19        & 94           & -           & 97          & 178  &  219     \\
\textbf{David E. Culler}       & 181  & -        & 430          & -           & -           & -   &  -      \\
\textbf{David J. DeWitt}       & 25   & 54       & 353          & -           & -           & -    &   -    \\
\textbf{Hari Balakrishnan}    & -    & -        & -            & -           & -           & -    &   -    \\ \midrule
$\mathbf{\lceil RMSE \rceil}$ & 1057 & 1161 & 1494 & 19462 & 1368 & 6157 & 15094 \\ 
\bottomrule
\end{tabular}
}
\end{table}

{\textbf{(iii) DBLP co-author network:}} The co-author dataset\footnote{https://aminer.org/DBLP\_Citation} and its related citation dataset was downloaded from DBLP (1,632,442 papers and 2,327,450 citation relationships till year 2011). Weights in this co-author network are assigned in the same manner as done for HepTh network. 

We report results for seven measures - SC, SC-Com, DC, EC, LC, NC and CoC in Table \ref{tab-top10DBLP}. During execution, STC ran out of memory, and BC, CC, and SoCap could not complete in 24 hours on our machine. Table \ref{tab-top10DBLP} lists top-10 cited authors in the DBLP network along with their ranks assigned by different measures. The absence of the author in top-500 ranks is denoted by `-'. 

SC was able to retrieve maximum authors (7/10) in top-500 ranks with least RMSE value, establishing its effectiveness. EC fails to find any author in the top-10 cited authors confirming that it is not an appropriate centrality measure for collaboration networks. Rest of the measures  SC-Com, DC, LC, NC, and CoC retrieve fewer authors than SC and exhibit larger RMSE value.  Ranks of DBLP dataset are analysed further in Sec. \ref{sec-consonance}.

{\textbf{(iv) USPTO co-author network:}} The downloaded\footnote{http://www.nber.org/patents/} USPTO data consists of inventor details including co-authorship of all patents granted from 1975-1999 by the US Patent Office (USPTO) and citing-cited relation between patents. 

We report results for seven measures SC, SC-Com, DC, EC, LC, NC, and CoC in Table \ref{tab-top10USPTO}. For this network too, STC ran out of memory, and BC, CC, and SoCap could not complete in 24 hours on our machine. Table \ref{tab-top10USPTO} lists top-10 cited inventors in the USPTO network along with their ranks assigned by different measures.

SC was able to find 5/10 inventors in top-500 while other measures are able to retrieve much fewer. SC is unable to find George Spector in its top-500 rankers but DC, LC, and CoC assign him rank 1. While investigating the failure of SC to retrieve this inventor, we discovered that George Spector is apparently not an inventor at all\footnote{http://www.entrepreneur.com/article/185660}. He runs a New York business that helps small-time inventors obtain patents. Spector adds his name to those inventions netting him high degree as well as high citations. Since his collaborations are with one-time inventors, his social capital is much less. Shunpei Yamazaki who is the most prolific inventor in 2005 as reported by USA Today\footnote{http://usatoday30.usatoday.com/money/industries/technology/maney/2005-12-13-patent\_x.htm} is assigned much higher rank by SC.  
\begin{table}[!htbp]
\centering \scriptsize{
\caption{\scriptsize Top-10 cited authors in USPTO co-author network and their ranks assigned by various measures. `-' denotes absence of the inventor in the top-500 list of the measure.} 
\label{tab-top10USPTO}
\begin{tabular}{@{}llllllll@{}}
\toprule
Name       & \textbf{SC} & \textbf{SC-Com} & \textbf{DC} & \textbf{EC} & \textbf{LC} & \textbf{NC} & \textbf{CoC}\\ \midrule
\textbf{Felix Theeuwes}     & 260  &  -       & -          & -           & -           & -  & -         \\
\textbf{Donald E. Weder}       & 91 & -         & -          & -           & -           & -  & -        \\
\textbf{David T. Green } & 360  & -          & -          & -           & -         & -   &  -    \\
\textbf{Yasushi Sato}      & 160    & 83        & 159            & -           & 145           & 95 & -          \\
\textbf{Shunpei Yamazaki}           & 16   & 107       & 328           & -           & 376          & - & -       \\
\textbf{Jerome H. Lemelson}        & -   & -        & -          & -           & -           & -   &  -      \\
\textbf{Roshantha Chandra}   & -   &-       & -           & -           & -          & -  &  -     \\
\textbf{George Spector}       & -  & 252        & 1          & -           & 1           & 1  &  1       \\
\textbf{Yasushi Takatori}       & - & -         & -          & -           & -           & - &  -        \\
\textbf{Yoshiaki Shirato }    & -  & -          & -            & -           & -           & - & -          \\ \midrule
$\mathbf{\lceil RMSE \rceil}$ & 10565 & 19565 & 22214 & 305902 & 64403 & 84953 & 49124 \\
\bottomrule
\end{tabular}
}
\end{table}

\textit{Relatively strong agreement of SC score with ground truth in email and collaboration networks asserts its applicability to diverse contexts. Capitalizing on the underlying hierarchical and community structures prominent in social networks, SC effectively utilizes the strength of ties between actors to rank nodes.}
\subsection{Quality Analysis}
\label{sec-quality}
In view of available ground truth (citations) for all actors of DBLP, USPTO and HepTh networks, we check the quality of SC ranks using two metrics. We compare the top-cited authors with the top-k rankers predicted by compared measures and quantify the overlap using Jacquard Index. Admitting top-k authors delivered by a centrality measure as the retrieved set and top-k cited authors as the relevant set, we also evaluate classical information retrieval measures, precision, and recall. 
\subsubsection{Overlap between Top-cited Authors and Top-k Rankers}
Let $G_k$ be the set of top-k authors as per ground truth and $P^m_k$ be the set of top-k authors predicted by measure $m$. We compute the Jacquard Index ($JI^m$) of measure $m$ as follows.
\begin{equation}
\label{eqn-jacquard}
\scriptsize JI^m = \frac{|G_k \cap P^m_k|}{|G_k \cup P^m_k|} 
\end{equation}
We vary \textit{k} from 100 to 1000 and plot Jacquard Index for investigated centrality measures (Fig. \ref{fig-quality}(a),(b),(c)). The figures demonstrate that both the SC and SC-Com measures outperform the rest. SC retrieves more top-cited authors than SC-Com for DBLP and HepTh networks for lower values of \textit{k}, but for larger \textit{k}, SC-Com performs better than SC. For the USPTO network, the superior performance of SC compared to SC-Com is attributed to the fact that this network is extremely disconnected \citep{SoCap2014}. This is because, in the patent world, inventors tend to work in closed groups and patents across organizations/groups are rare.

EC identifies the least number of top-cited authors as it focuses only on the quality of its connections. It ignores the importance of actors arising from bridging ties. DC is the closest competitor of SC, with SC faring better with increasing $k$ (SC discovers more top-cited authors). Rest of the measures clearly lag behind SC in detecting important authors. This demonstrates that nodes with high social capital are high performers in the network because of the realistic factors that are taken into account by SC measure. 

\textit{Despite its relatively better performance, SC is unable to retrieve \textit{all} top-cited authors in all three co-author networks. Many external factors, such as working in many areas of research, quality of work, long publishing tenure, high productivity, etc. can potentially contribute to high citation count of an author. Even though collaboration with more influential authors correlates positively with higher citation count, it is not the only causal factor.}

\begin{figure}[!htbp]
\includegraphics[height=5.5in,width=5.5in]{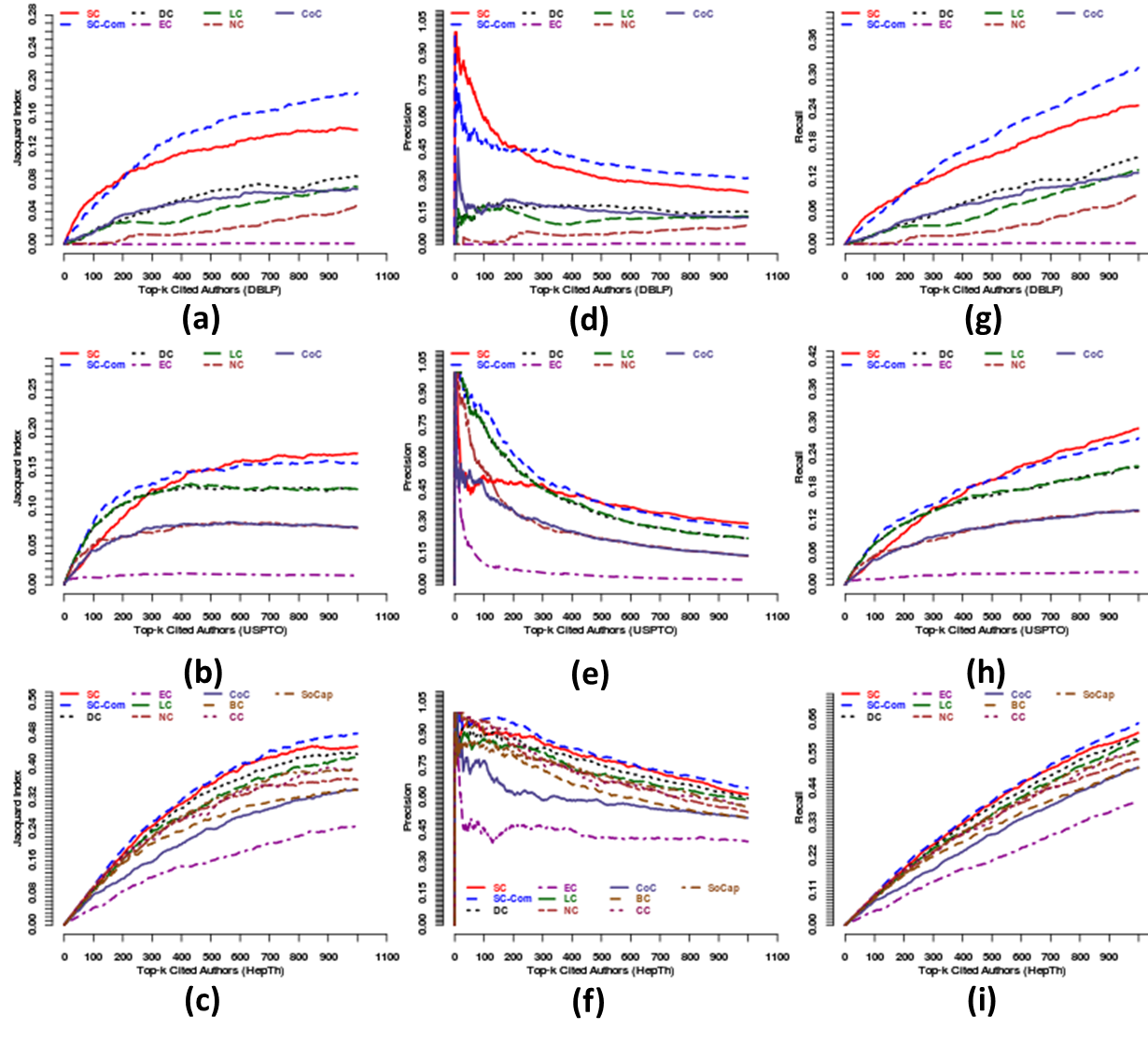}
\caption{Quality analysis of top-k cited authors (k = 1 to 1000) and consonance ranking of top-k authors retrieved by various measures using DBLP, USPTO and HepTh networks.}
\label{fig-quality} 
\end{figure} 
\subsubsection{Precision/ Recall Evaluation}
For this experiment, the top-k cited authors are treated as the \textit{relevant} set of results and the top-k authors predicted by a centrality measure as the \textit{retrieved} set. Precision measures how many top-cited authors are identified by the centrality measure and recall measures the completeness of the retrieved results. We compute precision and recall measures as follows:
\begin{equation}
\scriptsize Precision = \frac{\vert relevant \cap retrieved\vert}{\vert retrieved \vert} \,\,\,\,\,\,\,\,
Recall = \frac{\vert relevant \cap retrieved \vert}{\vert relevant \vert}
\end{equation}
We show the precision for DBLP, USPTO, and HepTh networks in Fig. \ref{fig-quality}(d),(e) and (f) by varying the top-1000 authors retrieved by each method. Fig. \ref{fig-quality}(g),(h) and (i) show the recall for the examined datasets. For all three networks, both methods SC and SC-Com, outperform all other comparative measures in terms of precision and recall. For the DBLP network, SC-Com and SC are able to find upto $28\%$ and $24\%$ top-1000 most cited authors respectively. In case of the USPTO network, SC-Com and SC are able to find upto $25\%$ and $28\%$ top-1000 most cited authors respectively. As before, the superior performance of SC in the USPTO network is due to the disconnected nature of the network. SC-Com is able to find $62\%$ and SC $61\%$ of top-1000 most cited authors in case of the HepTh network. 
\begin{table}[!htbp]
\label{tab-spearmanrho}
\begin{minipage}{3.55in}
\scriptsize {
\begin{tabular}{llllllll}
\hline
\textbf{N/W} & \textbf{SC} & \textbf{SC-Com} & \textbf{DC} & \textbf{EC} & \textbf{LC} & \textbf{NC} & \textbf{CoC} \\ \hline
\textbf{HepTh}            & 0.65        & 0.70         & 0.64        & 0.54        & 0.64        & 0.58        & 0.62         \\ \textbf{DBLP}             & 0.45        & 0.50         & 0.42        & 0.38        & 0.43        & 0.39        & 0.42         \\ \textbf{USPTO}            & 0.38        & 0.36         & 0.33        & 0.18        & 0.29        & 0.32        & 0.33         \\ 
\hline
\end{tabular}
}
\end{minipage}
\begin{minipage}[c]{0.22\textwidth}
\caption{\scriptsize Spearman's $\rho$ between ground truth ranks and centrality measures.}
\end{minipage}
\end{table}
\vspace{-3em}
\subsection{Consonance of Ranks with Ground Truth}
\label{sec-consonance}
Next, we investigate ranks delivered by centrality measures against the actual citation ranks of the corresponding authors for collaboration networks. Table 7 shows Spearman's rank correlation coefficient ($\rho$) between ground truth ranks and those delivered by SC and competing centrality measures. SC performs better than all other centrality measures except SC-Com. This is expected because SC-Com uses exact community detection whereas SC performs truss-based approximation of community structure. In the next section, we show the speed advantage of using SC compared to SC-Com.

\subsection{Scalability Evaluation w.r.t. Large Networks}
\label{sec-scalability}
We now investigate the scalability of SC measure. Three synthetic networks viz. Erd\"{o}s-R\'{e}nyi (ER), Watts-Strogatz (WS) and Forest Fire (FF) models were used for analysis\footnote{Synthetic networks generated using \textit{igraph} package; ER: erdos.renyi.game(n, m=2n); WS: watts.strogatz.game(n, dim=1, nei=4, p=0.3); FF: forest.fire.game(n, ambs=4, bw.factor=0.2, fw.prob=0.3)}. These networks have the number of nodes varying from 1 million to 10 million and edges ranging from 2 million to $\approx$60 million.  We computed SC measure on each network and averaged running time over three runs. We found that the computation time of SC for each network individually was maximum 92 seconds (Fig. \ref{fig-runtime}).

Next, we examine the improvement in running time of computing SC measure over SC-Com. We computed SC and SC-Com measures for five publicly available real-world large networks\footnote{Real-world networks downloaded from http://www.snap.edu/data}and three synthetic networks. Table \ref{tab-runtime} shows the running time (in seconds) of both measures. SC runs much faster than SC-Com for all examined datasets vindicating the claim of its scalability.
\begin{figure}[!htbp]
\begin{minipage}[c]{0.40\textwidth}
\includegraphics[height=1.4in,width=1.8in]{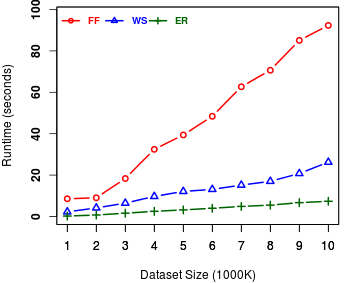}
\end{minipage}
\begin{minipage}[c]{0.6\textwidth}
\caption{\label{fig-runtime} Execution time of \textit{SC} for synthetic networks.}
\end{minipage}
\end{figure} 
\begin{table} [!htbp]
\centering
 \scriptsize
 {
 \caption{ \scriptsize Runtime Comparison of \textit{SC} and \textit{SC-Com} Centrality measures.} 
 \label{tab-runtime}
 \begin{tabular}{lllcl}
\hline
\textbf{Network} & \textbf{n} & \textbf{m} & \textbf{Running Time} & \textbf{(in seconds)} \\ \cline{4-5}
& $\mathbf{(10^6)}$   & $\mathbf{(10^6)}$ & \textbf{SC}   & \textbf{SC-Com}    \\ \hline
\textbf{Amazon0601}                   & 0.40     & 0.38    & 5.25    & 10.70                    \\ 
\textbf{Web-Google}                   & 0.87     & 5.10    & 8.08    & 9.90                    \\ 
\textbf{Web-Berk-Stan}                & 0.58     & 7.6     & 23.45  & 123.25                    \\ 
\textbf{Cit-Patent}               & 3.77     & 16.52   & 27.25   & 233.31                  \\ 
\textbf{Com-LiveJournal}          & 3.99     & 34.68   & 75.04   & 513.13                   \\ \hline 
\textbf{ER} & 2.94 & 6.0& 1.59 & 46472.74 \\ 
\textbf{WS} & 8.99 & 36.0 & 20.77 & 786.14 \\ 
\textbf{FF} & 9.0 & 53.1 & 85.11 & 32980.94 \\ \hline
\end{tabular}
   } 
\end{table}

\section{Conclusion}
Existing graph theoretic {\it centrality} measures take no cognizance of inherent hierarchy and community structure in social networks. These factors, however, both intuitively and theoretically have a compelling impact on the importance/power of an actor. Limited applicability of existing measures in social networks motivates proposal for \textit{Social Centrality} measure (SC).  

Based on the mature theory of {\it Social Capital}, SC emulates real-life behaviour of social actors to bond within community and bridge between communities. k-truss decomposition is performed to elicit network hierarchy and approximate community structure. Computation of SC is scalable because of underlying $O(m^{1.5})$ algorithm for k-truss decomposition.  
 
The empirical study based on diverse, real-life social networks vindicates the propositional basis of the model and demonstrates its effectiveness, superior performance, and scalability compared to prevailing centrality measures. 
\bibliographystyle{aps-nameyear}
\bibliography{sc-bib}

\begin{thebibliography}{35}
\ifx \bisbn   \undefined \def \bisbn  #1{ISBN #1}\fi
\ifx \binits  \undefined \def \binits#1{#1} \fi
\ifx \bauthor  \undefined \def \bauthor#1{#1} \fi
\ifx \bjtitle  \undefined \def \bjtitle#1{\textrm{#1}}\fi
\ifx \batitle  \undefined \def \batitle#1{#1} \fi
\ifx \bctitle  \undefined \def \bctitle#1{#1} \fi
\ifx \bvolume  \undefined \def \bvolume#1{\textbf{#1}}\fi
\ifx \byear  \undefined \def \byear#1{#1} \fi
\ifx \bissue  \undefined \def \bissue#1{#1} \fi
\ifx \bfpage  \undefined \def \bfpage#1{#1} \fi
\ifx \blpage  \undefined \def \blpage #1{#1} \fi
\ifx \burl  \undefined \def \burl#1{#1} \fi
\ifx \doiurl  \undefined \def \doiurl#1{#1} \fi
\ifx \betal  \undefined \def \betal{et al.} \fi
\ifx \binstitute  \undefined \def \binstitute#1{#1} \fi
\ifx \beditor  \undefined \def \beditor#1{#1} \fi
\ifx \bpublisher  \undefined \def \bpublisher#1{#1} \fi
\ifx \bbtitle  \undefined \def \bbtitle#1{\textit{#1}} \fi
\ifx \bedition  \undefined \def \bedition#1{#1} \fi
\ifx \bseriesno  \undefined \def \bseriesno#1{#1} \fi
\ifx \blocation  \undefined \def \blocation#1{#1} \fi
\ifx \bsertitle  \undefined \def \bsertitle#1{#1} \fi
\ifx \bsnm \undefined \def \bsnm#1{#1} \fi
\ifx \bsuffix \undefined \def \bsuffix#1{#1} \fi
\ifx \bparticle \undefined \def \bparticle#1{#1} \fi
\ifx \barticle \undefined \def \barticle#1{#1} \fi
\ifx \botherref \undefined \def \botherref #1{#1} \fi
\ifx \url \undefined \def \url#1{#1} \fi
\ifx \bchapter \undefined \def \bchapter#1{#1} \fi
\ifx \bbook \undefined \def \bbook#1{#1} \fi
\ifx \bcomment \undefined \def \bcomment#1{#1} \fi
\ifx \oauthor \undefined \def \oauthor#1{#1} \fi
\ifx \citeauthoryear \undefined \def \citeauthoryear#1{#1} \fi
\ifx \texttildelow  \undefined \def \texttildelow{\symbol{126}} \fi
\def \endbibitem {}
\ifx \bconflocation  \undefined \def \bconflocation#1{#1} \fi

\bibitem[\protect\citeauthoryear{Adler and Kwon}{2002}]{Adler2002Social}
\begin{botherref}
\oauthor{\binits{P.S.} \bsnm{Adler}},
\oauthor{\binits{S.W.} \bsnm{Kwon}},
{Social Capital: Prospects for a New Concept}.
The Academy of Management Review
\textbf{27}(1)
(2002)
\end{botherref}
\endbibitem

\bibitem[\protect\citeauthoryear{Alvarez-Hamelin et~al.}{2006}]{kcoretool06}
\begin{barticle}
\bauthor{\binits{J.I.} \bsnm{Alvarez-Hamelin}},
\bauthor{\binits{A.} \bsnm{Barrat}},
\bauthor{\binits{A.} \bsnm{Vespignani}},
\batitle{{Large Scale Networks Fingerprinting and Visualization Using the
  k-core Decomposition}}.
\bjtitle{Advances in Neural Information Processing Systems}
\bvolume{18},
\bfpage{41}--\blpage{50}
(\byear{2006})
\end{barticle}
\endbibitem

\bibitem[\protect\citeauthoryear{Alvarez{-}Hamelin et~al.}{2008}]{Alvarez08}
\begin{barticle}
\bauthor{\binits{J.I.} \bsnm{Alvarez{-}Hamelin}},
\bauthor{\binits{L.} \bsnm{Dall'Asta}},
\bauthor{\binits{A.} \bsnm{Barrat}},
\bauthor{\binits{A.} \bsnm{Vespignani}},
\batitle{{k-core Decomposition of Internet Graphs: Hierarchies, Self-similarity
  and Measurement Biases}}.
\bjtitle{Networks and Heterogenous Media}
\bvolume{3}(\bissue{2}),
\bfpage{371}
(\byear{2008})
\end{barticle}
\endbibitem

\bibitem[\protect\citeauthoryear{Bakman and Oliver}{2014}]{Leonid14}
\begin{bbook}
\bauthor{\binits{L.} \bsnm{Bakman}},
\bauthor{\binits{A.L.} \bsnm{Oliver}},
\bctitle{{Coevolutionary Perspective of Industry Network Dynamics}}
(\bpublisher{Emerald Group Publishing Ltd.},
\blocation{Bingley,United Kingdom}, \byear{2014}),
pp. \bfpage{3}--\blpage{36}.
\bcomment{Chap. 1}
\end{bbook}
\endbibitem

\bibitem[\protect\citeauthoryear{Bloch et~al.}{2017}]{Bloch16}
\begin{botherref}
\oauthor{\binits{F.} \bsnm{Bloch}},
\oauthor{\binits{M.O.} \bsnm{Jackson}},
\oauthor{\binits{P.} \bsnm{Tebaldi}},
{Centrality Measures in Networks}.
Available at SSRN: https://ssrn.com/abstract=2749124
(2017)
\end{botherref}
\endbibitem

\bibitem[\protect\citeauthoryear{Blondel et~al.}{2008}]{blonde08}
\begin{botherref}
\oauthor{\binits{V.D.} \bsnm{Blondel}},
\oauthor{\binits{J.L.} \bsnm{Guillaume}},
\oauthor{\binits{R.} \bsnm{Lambiotte}},
\oauthor{\binits{E.L.J.S.} \bsnm{Mech}},
{{Fast Unfolding of Communities in Large Networks}}.
Journal of Statistical Mechanics
(2008)
\end{botherref}
\endbibitem

\bibitem[\protect\citeauthoryear{Boldi and Vigna}{2014}]{survey-BoldiV13}
\begin{barticle}
\bauthor{\binits{P.} \bsnm{Boldi}},
\bauthor{\binits{S.} \bsnm{Vigna}},
\batitle{{Axioms for Centrality}}.
\bjtitle{Internet Mathematics}
\bvolume{10}(\bissue{3-4}),
\bfpage{222}--\blpage{262}
(\byear{2014})
\end{barticle}
\endbibitem

\bibitem[\protect\citeauthoryear{Borgatti and Everett}{2006}]{BorgattiE06}
\begin{barticle}
\bauthor{\binits{S.P.} \bsnm{Borgatti}},
\bauthor{\binits{M.G.} \bsnm{Everett}},
\batitle{{A Graph-theoretic Perspective on Centrality}}.
\bjtitle{Social Networks}
\bvolume{28}(\bissue{4}),
\bfpage{466}--\blpage{484}
(\byear{2006})
\end{barticle}
\endbibitem

\bibitem[\protect\citeauthoryear{Burt}{2001}]{burt2001structural}
\begin{bchapter}
\bauthor{\binits{R.S.} \bsnm{Burt}},
\bctitle{{Structural Holes versus Network Closure as Social Capital}},
in \bbtitle{Social Capital: Theory and Research}
(\bpublisher{Aldine de Gruyter},
\blocation{NY, USA}, \byear{2001}),
pp. \bfpage{31}--\blpage{56}
\end{bchapter}
\endbibitem

\bibitem[\protect\citeauthoryear{Cohen}{2008}]{Cohen08}
\begin{botherref}
\oauthor{\binits{J.} \bsnm{Cohen}},
{Trusses: Cohesive Subgraphs for Social Network Analysis}.
NSA:Technical report
(2008)
\end{botherref}
\endbibitem

\bibitem[\protect\citeauthoryear{David and Jon}{2010}]{David2010}
\begin{bbook}
\bauthor{\binits{E.} \bsnm{David}},
\bauthor{\binits{K.} \bsnm{Jon}},
\bbtitle{{Networks, Crowds, and Markets: Reasoning About a Highly Connected
  World}}
(\bpublisher{Cambridge University Press},
\blocation{New York, USA}, \byear{2010})
\end{bbook}
\endbibitem

\bibitem[\protect\citeauthoryear{Freeman}{1979}]{Freeman:SN79}
\begin{barticle}
\bauthor{\binits{L.C.} \bsnm{Freeman}},
\batitle{{Centrality in Social Networks: Conceptual Clarification}}.
\bjtitle{Social Networks}
\bvolume{1}(\bissue{3}),
\bfpage{215}--\blpage{239}
(\byear{1979})
\end{barticle}
\endbibitem

\bibitem[\protect\citeauthoryear{Girvan and Newman}{2002}]{Girvan02}
\begin{barticle}
\bauthor{\binits{M.} \bsnm{Girvan}},
\bauthor{\binits{M.E.J.} \bsnm{Newman}},
\batitle{{Community Structure in Social and Biological Networks}}.
\bjtitle{PNAS}
\bvolume{99}(\bissue{12}),
\bfpage{7821}--\blpage{7826}
(\byear{2002})
\end{barticle}
\endbibitem

\bibitem[\protect\citeauthoryear{Granovetter}{1973}]{granovetter73}
\begin{barticle}
\bauthor{\binits{M.S.} \bsnm{Granovetter}},
\batitle{{The Strength of Weak Ties}}.
\bjtitle{American Journal of Sociology}
\bvolume{78}(\bissue{6}),
\bfpage{1360}--\blpage{1380}
(\byear{1973})
\end{barticle}
\endbibitem

\bibitem[\protect\citeauthoryear{Gupta et~al.}{2016}]{Gupta2016}
\begin{barticle}
\bauthor{\binits{N.} \bsnm{Gupta}},
\bauthor{\binits{A.} \bsnm{Singh}},
\bauthor{\binits{H.} \bsnm{Cherifi}},
\batitle{ˆ{Centrality Measures for Networks with Community Structure}}.
\bjtitle{Physica A: Statistical Mechanics and its Applications}
\bvolume{452},
\bfpage{46}--\blpage{59}
(\byear{2016})
\end{barticle}
\endbibitem

\bibitem[\protect\citeauthoryear{Ilyas and Radha}{2010}]{IlyasR10}
\begin{bchapter}
\bauthor{\binits{M.U.} \bsnm{Ilyas}},
\bauthor{\binits{H.} \bsnm{Radha}},
\bctitle{{A KLT-inspired Node Centrality for Identifying Influential
  Neighborhoods in Graphs}},
in \bbtitle{Proceedings of 44th Annual Conference on Information Sciences and
  Systems},
\byear{2010},
pp. \bfpage{1}--\blpage{7}
\end{bchapter}
\endbibitem

\bibitem[\protect\citeauthoryear{Kang et~al.}{2011}]{KangPST11}
\begin{bchapter}
\bauthor{\binits{U.} \bsnm{Kang}},
\bauthor{\binits{S.} \bsnm{Papadimitriou}},
\bauthor{\binits{J.} \bsnm{Sun}},
\bauthor{\binits{H.} \bsnm{Tong}},
\bctitle{{Centralities in Large Networks: Algorithms and Observations}},
in \bbtitle{Proceedings of SIAM International Conference on Data Mining},
\byear{2011},
pp. \bfpage{119}--\blpage{130}.
\bisbn{978-0-898719-92-5}
\end{bchapter}
\endbibitem

\bibitem[\protect\citeauthoryear{Koschade}{2006}]{Koschade2006}
\begin{barticle}
\bauthor{\binits{S.A.} \bsnm{Koschade}},
\batitle{{A Social Network Analysis of Jemaah Islamiyah: The Applications to
  Counter-Terrorism and Intelligence}}.
\bjtitle{Studies in Conflict and Terrorism}
\bvolume{29}(\bissue{6}),
\bfpage{559}--\blpage{575}
(\byear{2006})
\end{barticle}
\endbibitem

\bibitem[\protect\citeauthoryear{Landherr et~al.}{2010}]{Landherr2010}
\begin{barticle}
\bauthor{\binits{A.} \bsnm{Landherr}},
\bauthor{\binits{B.} \bsnm{Friedl}},
\bauthor{\binits{J.} \bsnm{Heidemann}},
\batitle{{A Critical Review of Centrality Measures in Social Networks}}.
\bjtitle{{Business and Infornmation Systems Engineering}}
\bvolume{2}(\bissue{6}),
\bfpage{371}--\blpage{385}
(\byear{2010})
\end{barticle}
\endbibitem

\bibitem[\protect\citeauthoryear{Li et~al.}{2015}]{li2015}
\begin{botherref}
\oauthor{\binits{C.} \bsnm{Li}},
\oauthor{\binits{Q.} \bsnm{Li}},
\oauthor{\binits{P.V.} \bsnm{Mieghem}},
\oauthor{\binits{H.E.} \bsnm{Stanley}},
\oauthor{\binits{H.} \bsnm{Wang}},
{Correlation Between Centrality Metrics and Their Application to the Opinion
  Model}.
The European Physical Journal B
\textbf{88}(65)
(2015)
\end{botherref}
\endbibitem

\bibitem[\protect\citeauthoryear{Lin}{2008}]{Lin2008}
\begin{bchapter}
\bauthor{\binits{N.} \bsnm{Lin}},
\bctitle{{A Network Theory of Social Capital}},
in \bbtitle{The Handbook of Social Capital}
(\bpublisher{Oxford University Press},
\blocation{Oxford , New York, USA}, \byear{2008}),
pp. \bfpage{50}--\blpage{69}.
\bcomment{Chap. 2}
\end{bchapter}
\endbibitem

\bibitem[\protect\citeauthoryear{Nahapiet and Ghoshal}{1998}]{ghoshalsc-98}
\begin{barticle}
\bauthor{\binits{J.} \bsnm{Nahapiet}},
\bauthor{\binits{S.} \bsnm{Ghoshal}},
\batitle{{Social Capital, Intellectual Capital, and the Organizational
  Advantage}}.
\bjtitle{The Academy of Management Rev}
\bvolume{23}(\bissue{2}),
\bfpage{242}--\blpage{266}
(\byear{1998})
\end{barticle}
\endbibitem

\bibitem[\protect\citeauthoryear{Nick et~al.}{2013}]{Nick2013}
\begin{bchapter}
\bauthor{\binits{B.} \bsnm{Nick}},
\bauthor{\binits{C.} \bsnm{Lee}},
\bauthor{\binits{P.} \bsnm{Cunningham}},
\bauthor{\binits{U.} \bsnm{Brandes}},
\bctitle{{Simmelian Backbones: Amplifying Hidden Homophily in Facebook
  Networks}},
in \bbtitle{Proceedings of IEEE/ACM International Conference on Advances in
  Social Networks Analysis and Mining},
\byear{2013},
pp. \bfpage{525}--\blpage{532}
\end{bchapter}
\endbibitem

\bibitem[\protect\citeauthoryear{Ortmann and Brandes}{2014}]{Ortmann2014}
\begin{bchapter}
\bauthor{\binits{M.} \bsnm{Ortmann}},
\bauthor{\binits{U.} \bsnm{Brandes}},
\bctitle{{Triangle Listing Algorithms: Back from the Diversion}},
in \bbtitle{Proceedings of the Sixteenth Workshop on Algorithm Engineering and
  Experiments (ALENEX)},
\byear{2014},
pp. \bfpage{1}--\blpage{8}
\end{bchapter}
\endbibitem

\bibitem[\protect\citeauthoryear{Phillip}{1987}]{bonacich1987power}
\begin{barticle}
\bauthor{\binits{B.} \bsnm{Phillip}},
\batitle{{Power and Centrality: A Family of Measures}}.
\bjtitle{American Journal of Sociology}
\bvolume{92}(\bissue{5}),
\bfpage{1170}--\blpage{1182}
(\byear{1987})
\end{barticle}
\endbibitem

\bibitem[\protect\citeauthoryear{Putnam}{2002}]{Putnam2002}
\begin{bbook}
\bauthor{\binits{R.D.} \bsnm{Putnam}},
\bbtitle{Democracies in Flux: The Evolution of Social Capital in Contemporary
  Society}
(\bpublisher{Oxford University Press},
\blocation{New York}, \byear{2002})
\end{bbook}
\endbibitem

\bibitem[\protect\citeauthoryear{Qi et~al.}{2013}]{Laplacian13}
\begin{barticle}
\bauthor{\binits{X.} \bsnm{Qi}},
\bauthor{\binits{R.D.} \bsnm{Duval}},
\bauthor{\binits{K.} \bsnm{Christensen}},
\bauthor{\binits{E.} \bsnm{Fuller}},
\bauthor{\binits{A.} \bsnm{Spahiu}},
\bauthor{\binits{Q.} \bsnm{Wu}},
\bauthor{\binits{Y.} \bsnm{Wu}},
\bauthor{\binits{W.} \bsnm{Tang}},
\bauthor{\binits{C.} \bsnm{Zhang}},
\batitle{{Terrorist Networks, Network Energy and Node Removal: A New Measure of
  Centrality Based on Laplacian Energy}}.
\bjtitle{Social Networking}
\bvolume{2},
\bfpage{19}--\blpage{31}
(\byear{2013})
\end{barticle}
\endbibitem

\bibitem[\protect\citeauthoryear{Qi et~al.}{2015}]{Qi2015}
\begin{barticle}
\bauthor{\binits{X.} \bsnm{Qi}},
\bauthor{\binits{E.} \bsnm{Fuller}},
\bauthor{\binits{R.} \bsnm{Luo}},
\bauthor{\binits{C.} \bsnm{Zhang}},
\batitle{{A Novel Centrality Method for Weighted Networks based on the
  Kirchhoff Polynomial}}.
\bjtitle{Pattern Recognition Letters}
\bvolume{58},
\bfpage{51}--\blpage{60}
(\byear{2015})
\end{barticle}
\endbibitem

\bibitem[\protect\citeauthoryear{Radicchi et~al.}{2004}]{Radicchi2004}
\begin{barticle}
\bauthor{\binits{F.} \bsnm{Radicchi}},
\bauthor{\binits{C.} \bsnm{Castellano}},
\bauthor{\binits{F.} \bsnm{Cecconi}},
\bauthor{\binits{V.} \bsnm{Loreto}},
\bauthor{\binits{D.} \bsnm{Parisi}},
\batitle{{Defining and Identifying Communities in Networks}}.
\bjtitle{Proceedings of the National Academy of Sciences}
\bvolume{101}(\bissue{9}),
\bfpage{2658}--\blpage{2663}
(\byear{2004})
\end{barticle}
\endbibitem

\bibitem[\protect\citeauthoryear{Shi et~al.}{2007}]{Adamic07}
\begin{barticle}
\bauthor{\binits{X.} \bsnm{Shi}},
\bauthor{\binits{L.A.} \bsnm{Adamic}},
\bauthor{\binits{M.J.} \bsnm{Strauss}},
\batitle{{Networks of Strong Ties}}.
\bjtitle{Physica A: Statistical Mechanics and its Applications}
\bvolume{378}(\bissue{1}),
\bfpage{33}--\blpage{47}
(\byear{2007})
\end{barticle}
\endbibitem

\bibitem[\protect\citeauthoryear{Subbian et~al.}{2013}]{SoCap2014}
\begin{bchapter}
\bauthor{\binits{K.} \bsnm{Subbian}},
\bauthor{\binits{D.} \bsnm{Sharma}},
\bauthor{\binits{Z.} \bsnm{Wen}},
\bauthor{\binits{J.} \bsnm{Srivastava}},
\bctitle{{Finding Influencers in Networks using Social Capital}},
in \bbtitle{Proceedings of IEEE/ACM International Conference on Advances in
  Social Networks Analysis and Mining},
\byear{2013}
\end{bchapter}
\endbibitem

\bibitem[\protect\citeauthoryear{Valente et~al.}{2008}]{correlated-centrality}
\begin{barticle}
\bauthor{\binits{T.W.} \bsnm{Valente}},
\bauthor{\binits{K.} \bsnm{Coronges}},
\bauthor{\binits{C.} \bsnm{Lakon}},
\bauthor{\binits{E.} \bsnm{Costenbader}},
\batitle{{How Correlated are Network Centrality Measures?}}
\bjtitle{Connections}
\bvolume{28}(\bissue{1}),
\bfpage{16}--\blpage{26}
(\byear{2008})
\end{barticle}
\endbibitem

\bibitem[\protect\citeauthoryear{Wang and Cheng}{2012}]{wang-vldb12}
\begin{barticle}
\bauthor{\binits{J.} \bsnm{Wang}},
\bauthor{\binits{J.} \bsnm{Cheng}},
\batitle{{Truss Decomposition in Massive Networks}}.
\bjtitle{{Proceedings of VLDB Endowment}}
\bvolume{5}(\bissue{9}),
\bfpage{812}--\blpage{823}
(\byear{2012})
\end{barticle}
\endbibitem

\bibitem[\protect\citeauthoryear{Wang et~al.}{2015}]{Wang15}
\begin{barticle}
\bauthor{\binits{M.} \bsnm{Wang}},
\bauthor{\binits{C.} \bsnm{Wang}},
\bauthor{\binits{J.X.} \bsnm{Yu}},
\bauthor{\binits{J.} \bsnm{Zhang}},
\batitle{{Community Detection in Social Networks: An In-depth Benchmarking
  Study with a Procedure-oriented Framework}}.
\bjtitle{Proceedings of VLDB Endowment}
\bvolume{8}(\bissue{10}),
\bfpage{998}--\blpage{1009}
(\byear{2015})
\end{barticle}
\endbibitem

\bibitem[\protect\citeauthoryear{Xie and Szymanski}{2013}]{LabelRank13}
\begin{bchapter}
\bauthor{\binits{J.} \bsnm{Xie}},
\bauthor{\binits{B.K.} \bsnm{Szymanski}},
\bctitle{{LabelRank: A Stabilized Label Propagation Algorithm for Community
  Detection in Networks}},
in \bbtitle{Proceedings of the 2nd {IEEE} Network Science Workshop},
\byear{2013},
pp. \bfpage{138}--\blpage{143}
\end{bchapter}
\endbibitem

\end{thebibliography}
\end{document}